\newcommand{\myetal}{et al.\mbox{}}
\newcommand{\mysize}[1]{{\lvert #1 \rvert}}
\newcommand{\mythmcite}[1]{{\normalfont\cite{#1}}}
\newcommand{\myA}{\mathcal{A}}
\newcommand{\myC}{\mathcal{C}}
\newcommand{\myG}{\mathcal{G}}
\newcommand{\myH}{\mathcal{H}}
\newcommand{\eps}{\varepsilon}
\newcommand{\mydoi}[2]{\href{http://dx.doi.org/#1}{#2}}
\newcommand{\myurl}[2]{\href{#1}{#2}}
\newtheorem{theorem}{Theorem}
\newtheorem{corollary}{Corollary}
\theoremstyle{remark}
\newtheorem{remark}{Remark}
\begin{document}

\title{Local algorithms in (weakly) coloured graphs}
\author{Matti Åstrand, Valentin Polishchuk, Joel Rybicki, \\ Jukka Suomela, and Jara Uitto}
\date{January 2010}
\maketitle
\begin{abstract}
    A local algorithm is a distributed algorithm that completes after a constant number of synchronous communication rounds. We present local approximation algorithms for the minimum dominating set problem and the maximum matching problem in $2$-coloured and weakly $2$-coloured graphs. In a weakly $2$-coloured graph, both problems admit a local algorithm with the approximation factor ${(\Delta+1)}/2$, where $\Delta$ is the maximum degree of the graph. We also give a matching lower bound proving that there is no local algorithm with a better approximation factor for either of these problems. Furthermore, we show that the stronger assumption of a $2$-colouring does not help in the case of the dominating set problem, but there is a local approximation scheme for the maximum matching problem in $2$-coloured graphs.
\end{abstract}

\section{Introduction}

A \emph{$2$-colouring} of a graph is an assignment of the black or white colour to the nodes so that each black node is adjacent only to white nodes, and vice versa. A \emph{weak $2$-colouring} assigns the colours so that each non-isolated black node is adjacent to at least one white node, and vice versa.

A graph can be $2$-coloured if and only if it is bipartite; a weak $2$-colouring always exists. Given a global view of the graph, it is easy to find a $2$-colouring of a bipartite graph and a weak $2$-colouring of any graph.

In a distributed setting, it is not possible to $2$-colour a bipartite graph without essentially global information of the whole graph. However, Naor and Stockmeyer~\cite{naor95what} showed in 1995 that one can find a weak $2$-colouring with a \emph{constant-time} synchronous distributed algorithm, assuming that the degree of each node is \emph{odd} and bounded by a constant.

Constant-time distributed algorithms are known as \emph{local algorithms} \cite{naor95what,suomela09survey} -- in a local algorithm, the output of each node depends only on its local neighbourhood, and the radius of the neighbourhood does not depend on the number of nodes in the network.

\subsection{Contributions}

We present local approximation algorithms for both $2$-coloured and weakly $2$-coloured graphs. We assume that a colouring is given in the input, i.e., that every node knows its colour. We study exactly how much this additional information helps from the perspective of local approximation algorithms.

We focus on two classical problems -- minimum dominating set and maximum matching. We consider bounded-degree graphs; we assume that there is a known constant $\Delta$ such that the degree of any node is at most~$\Delta$. The results are summarised in Table~\ref{tab:results}. All results are tight: there are matching upper and lower bounds.

\begin{table}
    \centering
    \newcommand{\mysp}{\hspace{1.5em}}
    \begin{tabular}{l@{\mysp}l@{\mysp}l@{\mysp}l}
        \toprule
        Problem & approx. factor & upper bound & lower bound \\
        \midrule
        dominating set \\
        -- no colouring, even $\Delta$ & $\Delta+1$ & trivial & \cite{czygrinow08fast,lenzen08leveraging} \\
        -- no colouring, odd $\Delta$ & $\Delta$  & Theorem~\ref{thm:pos-ds-odd} & \cite{czygrinow08fast,lenzen08leveraging} \\
        -- weak $2$-colouring & $(\Delta+1)/2$ & Theorem~\ref{thm:pos-ds-weak} & Theorem~\ref{thm:neg-ds-strong} \\
        -- $2$-colouring & $(\Delta+1)/2$ & Theorem~\ref{thm:pos-ds-weak} & Theorem~\ref{thm:neg-ds-strong} \\
        \midrule
        matching \\
        -- no colouring & none & --- & \cite{czygrinow08fast} \\
        -- weak $2$-colouring & $(\Delta+1)/2$ & Theorem~\ref{thm:pos-m-weak} & Theorem~\ref{thm:neg-m-weak} \\
        -- $2$-colouring & $1+\eps$ & Theorem~\ref{thm:pos-m-strong} & --- \\
        \bottomrule
    \end{tabular}
    \caption{The best possible approximation factors achievable by a deterministic local algorithm.}\label{tab:results}
\end{table}

In particular, we show that a weak $2$-colouring is as good as a $2$-colouring from the perspective of the local approximability of the dominating set problem. Furthermore, a weak $2$-colouring provides enough symmetry-breaking information so that an approximation of a maximum matching can be found locally. Finally, with a $2$-colouring, the maximum matching can be approximated to within an arbitrary constant.

We also look at a third problem, maximum independent set. There is a trivial local $\Delta$-approximation algorithm for independent set in $2$-coloured graphs: take all white nodes and all isolated black nodes. However, we show that in weakly $2$-coloured graphs, the problem does not admit any local constant-factor approximation algorithm  (Theorem~\ref{thm:neg-is-weak}).

\subsection{Model of distributed computing}

Let $\myG = (V, E)$ be a graph identified with a distributed system. Each node $v \in V$ is a device; there is an edge $\{u,v\} \in E$ if $u$ and $v$ can communicate with each other. To avoid trivialities, we assume that there are no isolated nodes in~$\myG$.

Each node runs the same deterministic algorithm $\myA$. Communication is synchronous: on every time step, all nodes first receive messages from their neighbours, then all nodes perform local computation, and finally all nodes send messages to their neighbours. The algorithm running in the node $v \in V$ knows the degree of $v$. Furthermore, if the node $v$ has a label (such as a colour or a unique identifier), then $\myA$ has access to the label. If the edges are oriented, then $\myA$ knows which edges are outgoing and which edges are incoming. Finally, every node knows the maximum node degree $\Delta$.

The distributed algorithm $\myA$ is a local algorithm if there is a constant $T$ such that the algorithm completes in $T$ synchronous communication rounds, regardless of the input graph $\myG$. The algorithm $\myA$ and the constant $T$ may depend on the degree bound $\Delta$; however, the time $T$ cannot depend on the number of nodes in $\myG$.

The results that we present are essentially oblivious to any other details of the model of distributed computing. All lower bounds (impossibility results) hold even if we use Linial's~\cite{linial92locality} model. We can assume that each node knows the total number of nodes $\mysize{V}$, each node is assigned a unique identifier from the set $\{1, 2, \dotsc, \mysize{V}\}$, local computation is free, and the size of a message is unbounded.

Our upper bounds (algorithms) do not need to exploit any of these assumptions. The nodes do not need to know $\mysize{V}$. Local computations are simple and messages are small; in particular, the size of a message does not depend on $\mysize{V}$. Furthermore, the algorithms do not require unique identifiers. With the exception of Theorem~\ref{thm:pos-ds-odd}, the algorithms only assume that there is a \emph{port numbering}~\cite{angluin80local}: each node imposes an ordering on incident edges.

\section{Prior work}\label{sec:prior}

Randomised local algorithms exist for dominating set~\cite{kuhn05constant-time,kuhn05price,kuhn06price}, matching~\cite{wattenhofer04distributed,hoepman06efficient,nguyen08constant-time}, and independent set~\cite{czygrinow08fast}. However, deterministic local approximation algorithms are scarce. The set of all nodes is a trivial ${(\Delta+1)}$-approximation of a minimum dominating set, and there are local constant-factor approximation algorithms for dominating set in planar graphs~\cite{czygrinow08fast,lenzen08what}. Some positive results are known for matchings in bounded-degree $2$-coloured graphs: local algorithms exist for finding a maximal matching~\cite{hanckowiak98distributed} and a constant-factor approximation of a maximum-weight matching~\cite{floreen09almost-stable}.

To present the earlier negative results on which we build our lower bounds, we need the following definition: a \emph{numbered directed $n$-cycle} $\myC$ is a directed $n$-cycle where each node is assigned a unique identifier from the set $\{1, 2, \dotsc, n\}$. Each node has one incoming and one outgoing edge.

Linial's~\cite{linial92locality} seminal work shows that there is no local algorithm for finding a maximal independent set in $\myC$. Recently, Czygrinow \myetal{}~\cite{czygrinow08fast} and Lenzen and Wattenhofer~\cite{lenzen08leveraging} have extended this result to the approximability of the maximum independent set problem. We include here a proof of an adaptation of the inapproximability result, since the proofs of our lower bounds build directly upon it. We follow Czygrinow \myetal{}'s~\cite{czygrinow08fast} techniques. In the proof, a \emph{$k$-set} is a set with $k$ elements.
\begin{theorem}[\mythmcite{czygrinow08fast,lenzen08leveraging}]\label{thm:neg-is-cycle}
    For any $\alpha \ge 1$ and any local algorithm $\myA$, there exists an integer $n_0$ such that for every $n\ge n_0$ there is a numbered directed $n$-cycle $\myC$ where $\myA$ does not produce an $\alpha$-approximation for maximum independent set.
\end{theorem}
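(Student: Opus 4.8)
The plan is to use a Ramsey-type pigeonhole argument on the behaviour of $\myA$, combined with a counting argument showing that no independent set in a cycle that is "locally periodic" can be much larger than $n/2$, while $\myA$ on a suitable adversarial cycle is forced to produce a much smaller set.

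First I would fix $\myA$ together with its constant running time $T$. On a numbered directed $n$-cycle, the output of a node $v$ (whether $v$ joins the independent set) is a function only of the ordered list of $2T+1$ identifiers in the directed segment centred at $v$ — that is, of the pattern $(a_{-T}, \dotsc, a_{-1}, a_0, a_1, \dotsc, a_T)$ of identifiers it sees. Call such an ordered $(2T+1)$-tuple of distinct elements a \emph{view}. Next I would invoke the Ramsey theorem for ordered $(2T+1)$-subsets of $\{1,\dotsc,n\}$: for $n$ large enough there is a subset $S \subseteq \{1,\dotsc,n\}$ of size $2T+2$ (in fact any fixed constant we like) such that $\myA$ gives the same output on \emph{every} view whose underlying set lies in $S$ and whose elements appear in increasing order. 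In other words, along any monotone-increasing run of identifiers from $S$, $\myA$ behaves identically at every node — either it always outputs "in" or it always outputs "out".

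Then I would build the adversarial cycle. Take the $2T+2$ elements of $S$ in increasing order and repeat this block around the cycle, so $n$ is a multiple of $|S|$; almost every node then sees a strictly increasing local view drawn from $S$ (only the $O(1)$ nodes straddling the wrap-around "seam", where the identifiers jump from the largest back to the smallest element of $S$, see a non-monotone view). If $\myA$ always outputs "out" on increasing $S$-views, then $\myA$ selects at most $O(1)$ nodes total, whereas the true maximum independent set has size $\lfloor n/2 \rfloor$, so the approximation ratio is unbounded. If instead $\myA$ always outputs "in", then $\myA$ selects all but $O(1)$ nodes of the cycle — but in a directed cycle any two consecutive nodes are adjacent, so selecting two consecutive nodes violates independence; hence this output is not even a valid independent set, and in particular is not an $\alpha$-approximation. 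Either way, for every $n$ that is a large enough multiple of $|S|$ we obtain a numbered directed $n$-cycle on which $\myA$ fails to produce an $\alpha$-approximation, which is what the statement asks (the multiples of $|S|$ exceeding the Ramsey threshold are all large, and one can patch the non-multiples by a routine adjustment of the block, or simply note that the statement only requires \emph{some} bad cycle for each large $n$).

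The main obstacle is pinning down the Ramsey step cleanly: one must colour ordered $(2T+1)$-tuples rather than unordered sets, and make sure that "increasing run within the monochromatic set $S$" is exactly the structure that the periodic construction produces away from the seam. A secondary subtlety is bookkeeping the constant number of seam nodes and confirming that they cannot rescue a degenerate output — this is where the directedness of the cycle (so that adjacency of consecutive nodes is unambiguous) does the work. Both points are standard once set up carefully, following Czygrinow \myetal{}~\cite{czygrinow08fast}, so I expect the argument to go through without surprises.
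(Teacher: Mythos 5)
Your Ramsey step is essentially the right one (colouring a $(2T+1)$-set by the output of the middle node when its identifiers are laid out in increasing order is exactly what the paper does, and restricting to increasing views makes the ordered/unordered worry moot), and your observation that an ``always in'' monochromatic colour is impossible because consecutive nodes are adjacent also matches the paper. But the construction that follows has a fatal flaw: a numbered directed $n$-cycle must carry \emph{unique} identifiers from $\{1,\dotsc,n\}$, so you are not allowed to take a single block of $|S|=2T+2$ identifiers and repeat it around the cycle. The tiled cycle you analyse is not a legal instance, and the adversary cannot present it to $\myA$. If instead you place your one constant-size monochromatic set $S$ just once, in increasing order, somewhere in the cycle, Ramsey tells you nothing about the behaviour of $\myA$ on the remaining $n-O(T)$ nodes, which carry other identifiers; $\myA$ may select a near-optimal independent set there, and no contradiction with $\alpha$-approximation follows.

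This is precisely the difficulty the paper's proof is built to overcome: it takes the monochromatic sets to have size $m=\lceil 16T\alpha\rceil$ (not merely $2T+2$) and extracts \emph{many disjoint} monochromatic subsets $A_1,\dotsc,A_p$ of $\{1,\dotsc,n\}$, greedily via repeated applications of Ramsey, until fewer than $N$ identifiers remain; the cycle is then numbered by concatenating each $A_i$ in increasing order followed by the $k\le N$ leftovers. Every $A_i$ must be $0$-coloured (your ``all in is invalid'' argument, applied per block), so only the $2T$ boundary nodes of each block and the $k$ leftover nodes can output $1$, giving at most $2Tp+k\le n/(4\alpha)$ selected nodes against an optimum of at least $n/3$. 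This disjoint-blocks-plus-leftover structure is also what makes the statement hold for \emph{every} $n\ge n_0$, so the divisibility issue you wave away disappears; your proposal as written establishes neither the legality of the instance nor the claim for general $n$, and the single-monochromatic-block idea cannot be patched without effectively redoing the paper's covering argument.
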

\begin{proof}
    Denote by $T$ the number of synchronous communication rounds that $\myA$ takes. Let $m=\lceil 16 T \alpha\rceil$. By Ramsey's theorem~\cite{ramsey30problem}, there is a finite $N$ with the following property: Let $S$ be a set with at least $N$ elements, and assign an arbitrary label $f(X) \in \{0,1\}$ to each ${(2T+1)}$-set $X \subset S$. Then there is an $m$-set $A \subset S$ and a label $\ell \in \{0,1\}$ such that $f(X) = \ell$ for every ${(2T+1)}$-set $X \subset A$. We say that $A$ is an $\ell$-coloured $m$-subset of $S$.

    Let $n_0=\lceil 8 N \alpha\rceil$, let $n \ge n_0$ and $S = \{1, 2, \dotsc, n\}$. We first assign a label $f(X) \in \{0,1\}$ to each ${(2T+1)}$-set $X \subset S$. Let $X = \{x_1, x_2, \dotsc, x_{2T+1}\}$ with $x_1 < x_2 < \dotso < x_{2T+1}$. Consider a fragment of a numbered directed $n$-cycle with the unique identifiers $x_1, x_2, \dotsc, x_{2T+1}$, in this order; let $f(X) \in \{0,1\}$ be the output of $\myA$ for the node $x_{T+1}$, with $1$ denoting that the node joins the independent set. Observe that the output only depends on the set~$X$.

    Let us next construct a numbered directed $n$-cycle $\myC$ as follows. By the choice of $n_0$, we can find an $\ell_1$-coloured $m$-subset $A_1$ of $S$ for an $\ell_1 \in \{0,1\}$. As $\mysize{S \setminus A_1} \ge N$, we can then find an $\ell_2$-coloured $m$-subset $A_2$ of $S \setminus A_1$ for an $\ell_2 \in \{0,1\}$, etc. Overall we find $p=\lceil{(n-N)}/m\rceil$ disjoint sets $A_1, A_2, \dotsc, A_p$ such that $A_i$ is an $\ell_i$-coloured $m$-subset of $S$. Let $A_i = \{a_i^1, a_i^2, \dotsc, a_i^m\}$ with $a_i^1 < a_i^2 < \dotso < a_i^m$ for each $i$. Let $S \setminus (\bigcup_i A_i) = \{s_1, s_2, \dotsc, s_k\}$. Assign the unique identifiers in $\myC$ in the order
\[
    a_1^1, a_1^2, \dotsc, a_1^m, a_2^1, a_2^2, \dotsc, a_p^m, s_1, s_2, \dotsc, s_k.
\]

    An optimal independent set of $\myC$ contains at least $n/3$ nodes; to prove the theorem, it suffices to show that the algorithm $\myA$ outputs $1$ for at most $n/(4\alpha)$ nodes. To see this, observe that for each $i$ the output of the nodes $a_i^{T+1}, a_i^{T+2}, \dotsc, a_i^{m-T}$ is $\ell_i$. Since they cannot all be in the independent set, we have $\ell_i = 0$. Hence there are at most $2Tp + k$ nodes that output~$1$. By construction, $2Tp \le 2Tn/m \le n/(8\alpha)$ and $k = n - mp \le N \le n/(8\alpha)$.
\end{proof}

This immediately gives a negative result for the approximability of a maximum matching as well: given a matching $M$ in a numbered directed $n$-cycle, we can construct an independent set $I = \{ u : (u,v) \in M \}$ with $\mysize{I} = \mysize{M}$.
\begin{corollary}[\mythmcite{czygrinow08fast}]\label{cor:neg-m-general}
    There is no local constant-factor approximation algorithm for the maximum matching problem.
\end{corollary}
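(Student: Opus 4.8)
The plan is to turn a hypothetical local approximation algorithm for maximum matching into one for maximum independent set on numbered directed cycles, and then invoke Theorem~\ref{thm:neg-is-cycle}. So suppose, towards a contradiction, that for some constant $\alpha \ge 1$ there is a local $\alpha$-approximation algorithm $\myA_M$ for the maximum matching problem.

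I would first define a local algorithm $\myA_I$ for maximum independent set: run $\myA_M$, obtain a matching $M$, and let each node $u$ join the independent set exactly when its outgoing edge belongs to $M$; that is, $\myA_I$ outputs $I = \{\, u : (u,v) \in M \,\}$, as suggested just before the statement. Since forming $I$ from $M$ requires no further communication, $\myA_I$ is local whenever $\myA_M$ is. The first thing to verify is that $I$ is indeed an independent set of the cycle: if distinct nodes $u, u' \in I$ were adjacent, one of them --- say $u'$ --- would be the successor of the other, so $\{u,u'\}$ is the outgoing edge of $u$ and hence lies in $M$ (as $u \in I$); but $u'$ also lies on an edge of $M$, namely its own outgoing edge, and for $n \ge 3$ that edge differs from $\{u,u'\}$, contradicting that $M$ is a matching. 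Moreover the map $(u,v) \mapsto u$ is injective on $M$, so $\mysize{I} = \mysize{M}$.

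Next I would compare the two optima. In an $n$-cycle a maximum matching and a maximum independent set both have $\lfloor n/2 \rfloor$ elements, so on every numbered directed cycle the optimum for matching equals the optimum for independent set. Hence $\mysize{I} = \mysize{M} \ge \lfloor n/2 \rfloor / \alpha$, which means $\myA_I$ produces an $\alpha$-approximation of a maximum independent set on \emph{every} numbered directed cycle. This contradicts Theorem~\ref{thm:neg-is-cycle}. As $\alpha \ge 1$ was an arbitrary constant, no local constant-factor approximation algorithm for maximum matching exists, which is the claim of the corollary.

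I do not expect a real obstacle here; the argument is just a short reduction. The only points needing a little care are (i) that the tails of a matching in a \emph{directed} cycle form an independent set --- where the orientation is exactly what lets us replace ``adjacent'' by ``successor of'' --- and (ii) that a cycle has equal matching and independent-set optima, so that the approximation guarantee transfers with no loss. If one wanted to sidestep (ii) altogether, using only that the size of a maximum independent set is at most $n$ while the size of a maximum matching is at least $(n-1)/2$ would still yield a constant-factor independent-set approximation, which is all that Theorem~\ref{thm:neg-is-cycle} needs.
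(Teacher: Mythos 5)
Your proposal is correct and is essentially the paper's own argument: the paper proves the corollary via exactly the reduction $I = \{u : (u,v) \in M\}$ with $\mysize{I} = \mysize{M}$ stated just before the corollary, combined with Theorem~\ref{thm:neg-is-cycle}. Your additional checks (that $I$ is independent in the directed cycle and that the two optima on an $n$-cycle coincide) are just the details the paper leaves implicit.
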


\section{Lower bounds and local reductions}\label{sec:lower}

In this section, we present local reductions that establish lower bounds for local approximation algorithms. All reductions are from the maximum independent set problem in numbered directed cycles (Theorem~\ref{thm:neg-is-cycle}). The reductions yield the strongest possible negative results, as there is a matching positive result for each of them. As an introduction to the local reductions, we begin with a known result for general graphs; Theorem~\ref{thm:neg-ds-general} is a restatement of the negative results for planar graphs~\cite{czygrinow08fast} and unit-disk graphs~\cite{lenzen08leveraging}.

\begin{theorem}[\mythmcite{czygrinow08fast,lenzen08leveraging}]\label{thm:neg-ds-general}
    For any even $\Delta \ge 2$ and $\eps > 0$, there is no local algorithm with approximation factor ${(\Delta+1-\eps)}$ for the minimum dominating set problem.
\end{theorem}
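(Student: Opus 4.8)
The plan is a local reduction from maximum independent set in numbered directed cycles (Theorem~\ref{thm:neg-is-cycle}). Fix an even $\Delta = 2d$ and an $\eps>0$. The gadget is the $d$-th power of the cycle: from a numbered directed $n$-cycle $\myC$ build the graph $C_n^d$ on the same vertices (keeping the identifiers), joining two vertices whenever their cyclic distance along $\myC$ is at most $d$. This $C_n^d$ is $\Delta$-regular and can be assembled locally — each node finds its new neighbours by walking $d$ steps along in-edges and $d$ steps along out-edges. The point of this gadget is that a single vertex dominates exactly the $\Delta+1$ vertices within cyclic distance $d$ of it, so the domination number of $C_n^d$ is at most $\lceil n/(\Delta+1)\rceil$ whereas $|V|=n$; hence the trivial $\Delta+1$ bound is essentially tight for the optimum, and any $(\Delta+1-\eps)$-approximate dominating set of $C_n^d$ has size at most $n-\tfrac{\eps}{\Delta+1}n+O(1)$, i.e.\ it must leave a constant fraction of the vertices unselected.

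Now suppose $\myA$ is a local algorithm, running in $T$ rounds, that always outputs a $(\Delta+1-\eps)$-approximate minimum dominating set. Simulate $\myA$ on $C_n^d$ — since each round of $\myA$ there costs $d$ rounds on $\myC$, this is still a constant-radius computation on $\myC$ — to obtain a dominating set $D$ with $U:=V\setminus D$ of size at least $\tfrac{\eps}{\Delta+1}n-(\Delta+1)$. Because $D$ is dominating and a vertex of $C_n^d$ sees only cyclic distance $\le d$, no $\Delta+1$ consecutive vertices of $\myC$ can all lie in $U$ — the middle one would be undominated — so $U$ splits into disjoint runs of at most $\Delta$ consecutive vertices. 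From each run, locally pick every second vertex, starting at the end of the run whose $\myC$-predecessor lies in $D$. Within a run the chosen vertices are at cyclic distance $\ge 2$, and consecutive runs are separated by at least one vertex of $D$, so the chosen vertices form an independent set of $\myC$ of size at least $|U|/2\ge\tfrac{\eps}{2(\Delta+1)}n-O(1)$. Since every choice above depends only on a constant-radius neighbourhood of $\myC$, this is a local algorithm for maximum independent set in numbered directed cycles.

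It remains to derive a contradiction with Theorem~\ref{thm:neg-is-cycle}. Apply that theorem with $\alpha=\lceil(\Delta+1)/\eps\rceil+1$: there is an $n_0$ such that for every $n\ge n_0$ some numbered directed $n$-cycle forces our independent-set algorithm to output at most $n/(4\alpha)$ vertices. But for all sufficiently large $n$ the construction above outputs more than $\tfrac{\eps}{4(\Delta+1)}n>n/(4\alpha)$ vertices on every $n$-cycle, a contradiction; hence no such $\myA$ exists. The step that needs the most care is the construction of the independent set in the second paragraph: one must notice that the slack in a near-trivial dominating set is confined to short runs of unselected vertices, and that alternating within each run turns that slack into a genuine, locally computable independent set of $\myC$. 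The remaining work — choosing $\alpha$, fixing the threshold $n_0$, and checking that the simulation keeps the radius constant — is routine bookkeeping.
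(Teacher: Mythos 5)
Your proposal is correct and follows essentially the same route as the paper: reduce from Theorem~\ref{thm:neg-is-cycle} by simulating the dominating-set algorithm on the $(\Delta/2)$-th power of a numbered directed cycle, observe that the non-dominating slack forces a linear number of short runs of unselected vertices, and extract from them a locally computable independent set of linear size. The only differences are cosmetic: you pick every second vertex in each run (gaining a factor over the paper's choice of one vertex per run) and handle general $n$ with $O(1)$ slack instead of restricting to $n$ divisible by $\Delta+1$; also note that the $n/(4\alpha)$ bound you quote comes from the proof rather than the statement of Theorem~\ref{thm:neg-is-cycle}, but the contradiction goes through just as well using the statement directly.
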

\begin{proof}
    Suppose that such an algorithm $\myA$ exists for some $\Delta=2k$. Let $\alpha=\Delta(\Delta+1)/\eps$. We will use $\myA$ to find an independent set with at least $n/\alpha$ nodes in numbered directed $n$-cycles for any $n$ divisible by $\Delta+1$. This is a contradiction with Theorem~\ref{thm:neg-is-cycle}.

    Given an $n$-cycle $\myC$ (Figure~\ref{fig:reductions}a), we construct the $2k$-regular graph $\myG = \myC^k$ (Figure~\ref{fig:reductions}b illustrates the case $k=2$); the node identifiers are inherited from the cycle $\myC$. We simulate the algorithm $\myA$ in the graph $\myG$. There is a dominating set of $\myG$ with $n/(\Delta+1)$ nodes; hence $\myA$ must output a dominating set $D$ with at most
\[
    \Bigl(1 - \frac{\eps}{\Delta+1}\Bigr) n
\]
nodes. Thus $\mysize{V \setminus D} \ge \eps n/(\Delta+1)$. The subgraph of $\myC$ induced by $V \setminus D$ consists of paths with at most $\Delta$ nodes each; hence there are at least
\[
    \frac{\eps n}{\Delta(\Delta+1)} = \frac{n}{\alpha}
\]
such paths. Construct an independent set $I$ with $\mysize{I} \ge n/\alpha$ by taking the first node of each such path.
\end{proof}

\begin{figure}
  \centering
  \input{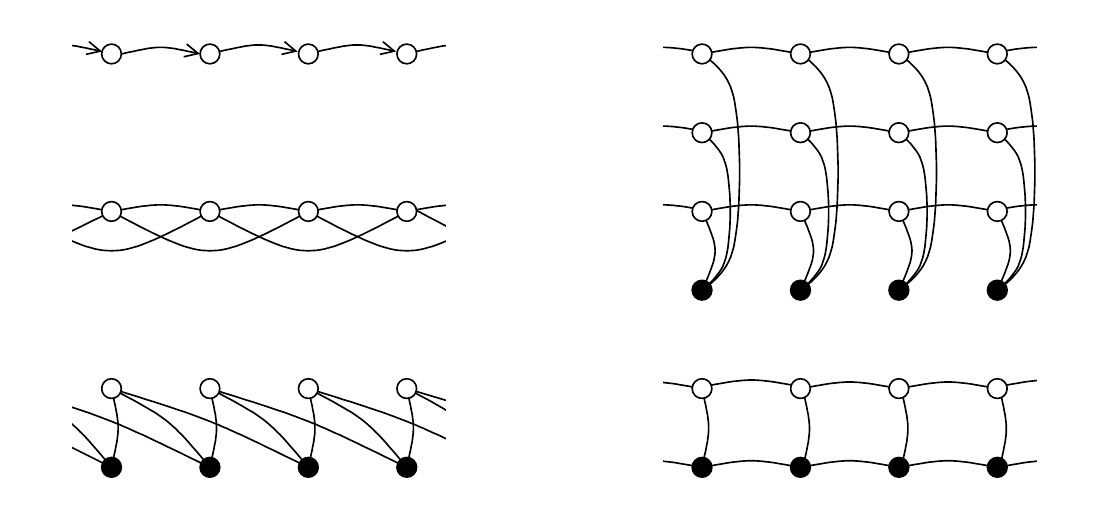_t}
  \caption{The local reductions for the lower bounds.}\label{fig:reductions}
\end{figure}

\begin{theorem}\label{thm:neg-ds-strong}
    For any $\Delta \ge 2$ and $\eps > 0$, there is no local algorithm with approximation factor ${(\Delta+1)/2-\eps}$ for dominating sets in $2$-coloured graphs.
\end{theorem}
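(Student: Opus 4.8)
The plan is to reuse the reduction template behind Theorem~\ref{thm:neg-ds-general}. From a hypothetical local algorithm $\myA$ with approximation factor $(\Delta+1)/2-\eps$ for dominating sets in $2$-coloured graphs, running in $T$ rounds, I would build a local algorithm that finds an independent set of size at least $n/\alpha$ in every numbered directed $n$-cycle with $(\Delta+1)\mid n$, where $\alpha=\Delta(\Delta+1)/(2\eps)$; this contradicts Theorem~\ref{thm:neg-is-cycle}. (We may assume $\eps$ is small enough that $\alpha\ge 1$, since otherwise there is nothing to prove.)

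The crux is the choice of host graph: it must be genuinely bipartite and yet still force a dominating set of density $2/(\Delta+1)$ -- a mere bipartite repair of $\myC^k$ will not do. Given $\myC$ on node set $\{0,\dotsc,n-1\}$, put $\delta=\lfloor\Delta/2\rfloor$ and let $\myG$ be the bipartite graph with white nodes $w_0,\dotsc,w_{n-1}$ and black nodes $b_0,\dotsc,b_{n-1}$, joining $w_i$ to $b_j$ whenever $\lvert i-j\rvert\le\delta\pmod n$, but excluding $i=j$ when $\Delta$ is even (so $\myG$ is a bipartite double cover of $\myC^{\delta}$ in the even case). Then $\myG$ has maximum degree $\Delta$, carries a proper $2$-colouring supplied for free in the input, and each node of $\myG$ is determined by an $O(\delta T)$-window of $\myC$, so simulating $\myA$ on $\myG$ is a constant-round computation in $\myC$. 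I would then exhibit an explicit dominating set of $\myG$ of size $2n/(\Delta+1)$: for even $\Delta$ take $\{w_i,b_i:i\equiv 0\}$ and for odd $\Delta$ take $\{w_i:i\equiv 0\}\cup\{b_i:i\equiv\delta+1\}$, both modulo $\Delta+1$ -- a short calculation shows these residues hit every index from within distance $\delta$.

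Running $\myA$ on $\myG$ yields a dominating set $D$ with $\mysize{D}\le((\Delta+1)/2-\eps)\cdot 2n/(\Delta+1)=n-2\eps n/(\Delta+1)$. For a cycle node $i$ set $g(i)=\mysize{\{w_i,b_i\}\cap D}$ and $n_j=\mysize{\{i:g(i)=j\}}$; from $n_0+n_1+n_2=n$ and $n_1+2n_2=\mysize{D}$ we get $n_0\ge n_0-n_2=n-\mysize{D}\ge 2\eps n/(\Delta+1)$. Let $P=\{i:g(i)=0\}$. The key claim is that each maximal run of consecutive cycle nodes inside $P$ has at most $\Delta$ nodes: in a run $i_0,i_0+1,\dotsc$ of more than $\Delta\ge 2\delta$ nodes, the white node $w_{i_0+\delta}$ is outside $D$, hence must be dominated by some $b_j\in D$ with $\lvert j-(i_0+\delta)\rvert\le\delta$, i.e.\ $j\in[i_0,i_0+2\delta]$; but that whole interval lies in $P$, so every such $b_j$ is outside $D$ -- a contradiction. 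Since $D\ne\emptyset$ we have $P\ne\{0,\dotsc,n-1\}$, so $\myC[P]$ is a disjoint union of at least $\mysize{P}/\Delta\ge n/\alpha$ paths, and taking the first node of each path -- a property a node can decide locally by checking whether its predecessor is also in $P$ -- gives an independent set of $\myC$ of size at least $n/\alpha$, contradicting Theorem~\ref{thm:neg-is-cycle}.

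The main obstacle is this first step: pinning down a $2$-colourable host graph whose optimum dominating set is still only a $2/(\Delta+1)$ fraction of the vertices, which is precisely what turns the bound $\Delta+1$ into $(\Delta+1)/2$. A secondary subtlety is that the analysis must track cycle nodes both of whose copies avoid $D$ -- an index with only one copy outside $D$ imposes no bound on run length -- so it is $n_0$, and not a cruder complement count, that has to be lower-bounded.
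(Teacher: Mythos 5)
Your proposal is correct and follows essentially the same route as the paper: a local reduction from Theorem~\ref{thm:neg-is-cycle} via a $\Delta$-regular $2$-coloured blow-up of the cycle (two copies per cycle node), an explicit dominating set of size $2n/(\Delta+1)$, and then extracting an independent set from the cycle nodes both of whose copies avoid the output $D$, using the fact that such nodes form short runs. The only difference is cosmetic: the paper joins $u_1$ to $v_2$ when the \emph{directed} distance from $u$ to $v$ is at most $\Delta-1$ (avoiding your parity case split and giving runs of length at most $\Delta-1$, hence $\alpha=(\Delta^2-1)/(2\eps)$), whereas you use the symmetric window $\lvert i-j\rvert\le\lfloor\Delta/2\rfloor$; both yield the same contradiction.
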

\begin{proof}
    Suppose that such an algorithm $\myA$ exists. Let $\alpha = (\Delta^2-1)/(2\eps)$. We will use $\myA$ to find an independent set with at least $n/\alpha$ nodes in numbered directed $n$-cycles for any $n$ divisible by $\Delta+1$. This is a contradiction with Theorem~\ref{thm:neg-is-cycle}.

    Given an $n$-cycle $\myC$ (Figure~\ref{fig:reductions}a), we construct a $\Delta$-regular $2$-coloured graph $\myG$ as follows (Figure~\ref{fig:reductions}c shows the case $\Delta=3$). For each node $v$ in $\myC$, there is a white node $v_1$ and a black node $v_2$ in $\myG$. If the directed path from $u$ to $v$ in $\myC$ has at most $\Delta-1$ edges, then there is an edge $\{u_1, v_2\}$ in $\myG$. The node identifiers are inherited from the cycle~$\myC$: for example, let $v_1 = 2v-1$ and $v_2 = 2v$.

    We simulate the algorithm $\myA$ in the graph $\myG$. There is a dominating set of $\myG$ with $2n/(\Delta+1)$ nodes; hence $\myA$ must output a dominating set $D$ with at most
\[
    \Bigl(1-\frac{2\eps}{\Delta+1}\Bigr) n
\]
nodes. Let $B = \{ v \in V : v_1 \notin D\textrm{ and } v_2 \notin D \}$; we have
\[
    \mysize{B} \ge \mysize{V} - \mysize{D} \ge \frac{2\eps n}{\Delta+1}.
\]
The subgraph of $\myC$ induced by $B$ consists of paths with at most $\Delta-1$ nodes each; hence there are at least
\[
    \frac{2\eps n}{(\Delta+1)(\Delta-1)} = \frac{n}{\alpha}
\]
such paths. Construct an independent set $I$ with $\mysize{I} \ge n/\alpha$ by taking the first node of each such path.
\end{proof}

\begin{theorem}\label{thm:neg-m-weak}
    For any $\Delta \ge 3$ and $\eps > 0$, there is no local algorithm with approximation factor ${(\Delta+1)/2-\eps}$ for maximum matching in weakly $2$-coloured graphs.
\end{theorem}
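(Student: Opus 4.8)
The proof should follow the template of Theorems~\ref{thm:neg-ds-general} and~\ref{thm:neg-ds-strong}: a local reduction from maximum independent set in numbered directed cycles (Theorem~\ref{thm:neg-is-cycle}). Suppose a local algorithm $\myA$ with approximation factor $(\Delta+1)/2-\eps$ for maximum matching in weakly $2$-coloured graphs exists, and let $T$ be its round complexity. Given a numbered directed $n$-cycle $\myC$ (with $n$ a multiple of the period of the construction), I would replace each node $v$ of $\myC$ by a constant-size gadget $H_v$ and join each gadget to that of its cycle-successor, inheriting the node identifiers from $\myC$ (see Figure~\ref{fig:reductions}d). Doing this gadget-wise matters because the radius-$T$ ball of any node of $\myG$ then depends only on a window of $O(T)$ consecutive nodes of $\myC$ together with their identifiers; hence ``run $\myA$ on $\myG$, then read off a local function of its output'' is itself a local algorithm on $\myC$, to which Theorem~\ref{thm:neg-is-cycle} applies.

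Two requirements shape the gadget. First, $\myG$ must be weakly $2$-colourable but \emph{not} $2$-colourable — the gadgets must create odd cycles — because by Theorem~\ref{thm:pos-m-strong} a proper $2$-colouring already yields a $(1+\eps)$-approximation, so for $\Delta\ge 3$ a bipartite construction could never be hard; the obstruction must sit on monochromatic edges, which is exactly where $\myA$ loses the black-to-white proposal asymmetry that a proper colouring would give it. Second, $\myG$ must admit a matching that is a factor $(\Delta+1)/2$ (up to lower-order terms) larger than any matching a local algorithm can produce, where ``larger'' is realised by a global pattern: I would build $\myG$ so that each gadget has a cheap mode contributing about one edge and a heavy mode contributing about $(\Delta+1)/2$ times as much, but the heavy mode of $H_v$ consumes a linking edge shared with $H_{v+1}$, so that in any matching the heavy gadgets form runs of bounded length along $\myC$ — and a near-optimal matching is obtained precisely by putting roughly every other gadget in the heavy mode, i.e.\ by choosing a large independent set of $\myC$.

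Given such a $\myG$, the combinatorial core is the analogue of the ``paths of at most $\Delta-1$ nodes'' step in Theorem~\ref{thm:neg-ds-strong}. Charge every edge of the matching $M$ output by $\myA$ to a gadget it meets; each gadget is charged at most its cheap quota unless it is heavy, and the heavy gadgets induce in $\myC$ only paths of bounded length. Hence $\mysize{M}$ exceeds its ``all cheap'' value by at most a constant times the number of heavy gadgets, while the approximation guarantee forces $\mysize{M}\ge \mathrm{OPT}/\bigl((\Delta+1)/2-\eps\bigr)$; since $\mathrm{OPT}$ is the full $(\Delta+1)/2$-per-gadget value, this forces $\Theta(\eps n)$ heavy gadgets, hence (taking one node from each maximal run) an independent set of $\myC$ of size $n/\alpha$ for a constant $\alpha=\alpha(\Delta,\eps)$ — computed by a local algorithm, contradicting Theorem~\ref{thm:neg-is-cycle}. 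I would fix $\alpha$ at the end to make the arithmetic close (it should come out close to $(\Delta^2-1)/(2\eps)$, as in Theorem~\ref{thm:neg-ds-strong}).

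The hard part is the gadget. It has to be weakly $2$-colourable while containing odd cycles, stay within maximum degree $\Delta$, carry a maximum matching as dense as $(\Delta+1)/2$ per cycle node, yet be so rigid that any matching which beats the one-edge-per-gadget baseline must spend a ``shared'' linking resource and therefore cannot do so at two neighbouring gadgets in a row. Getting all of these at once is delicate precisely because a weak $2$-colouring is a powerful hint that the construction must render useless, and because the two matching-side estimates — an upper bound on the optimum and a lower bound on what a purely local matching can achieve on the same graph — are what must meet at the ratio $(\Delta+1)/2$. Verifying these two estimates for the chosen gadget (and checking the ``runs of bounded length'' claim) is where the real work lies.
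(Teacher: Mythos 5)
Your proposal reproduces the general reduction template (simulate $\myA$ on a gadget-per-cycle-node graph, then locally extract a large independent set of $\myC$ and contradict Theorem~\ref{thm:neg-is-cycle}), but it stops exactly where the paper's proof actually begins: you never exhibit the gadget, and you yourself flag that the two matching estimates and the ``bounded runs'' claim are unverified. In the paper the gadget is completely explicit -- each cycle node $v$ becomes a black hub $v_0$ attached to $\Delta$ white copies $v_1,\dotsc,v_\Delta$, and the white copies of consecutive cycle nodes are joined, so the whites form $\Delta$ parallel copies of $\myC$. The optimum is $(\Delta+1)n/2$ (a perfect matching), any matching has at most $n$ edges touching black nodes, so the approximation guarantee forces at least $\eps' n$ white--white matched edges, and these directly encode $\Delta$ independent sets $I_1,\dotsc,I_\Delta$ of $\myC$ (the tails of the matched copy-edges). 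The combinatorial core is then \emph{not} a ``heavy gadgets form bounded runs, take one node per run'' argument as in Theorems~\ref{thm:neg-ds-general} and~\ref{thm:neg-ds-strong}; it is a different lemma: a local procedure that merges $\Delta$ independent sets into one while losing only a factor $2\Delta-1$ (iterate over $i=1,\dotsc,\Delta$, add $I_i$ in parallel, delete added nodes and their neighbours from later sets), which is needed precisely because a local algorithm cannot identify the single best index $i$. Your sketch contains neither this merging step nor any mechanism that would make the ratio $(\Delta+1)/2$ come out of an unspecified cheap/heavy gadget, so as it stands it is a plan rather than a proof.

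There is also a conceptual error that would misdirect the gadget search: you insist the construction must contain odd cycles because ``a bipartite construction could never be hard'' in view of Theorem~\ref{thm:pos-m-strong}. That theorem requires the proper $2$-colouring to be \emph{given as input}; the lower bound only supplies a weak $2$-colouring. Indeed, the paper's graph $\myG$ is bipartite for even $n$ (the white copies are even cycles and each hub $v_0$ attaches to same-parity whites), yet it is hard because the colouring handed to the algorithm is only weak (it has monochromatic white--white edges). So non-bipartiteness is neither necessary nor the source of hardness, and the constraint you impose on the gadget is spurious. Minor further point: your guessed $\alpha\approx(\Delta^2-1)/(2\eps)$ is not what the argument yields; the paper gets $\alpha=(2\Delta-1)/\eps'$ with $\eps'=2\eps/(\Delta+1-2\eps)$, the factor $2\Delta-1$ coming exactly from the merging lemma you are missing.
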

\begin{proof}
    Assume that such an algorithm $\myA$ exists. Let
    \[
        \eps' = \frac{2\eps}{\Delta+1-2\eps}, \qquad
        \alpha = \frac{2\Delta-1}{\eps'}.
    \]
    We will use $\myA$ to find an independent set with at least $n/\alpha$ nodes in numbered directed $n$-cycles for any even $n$. This is a contradiction with Theorem~\ref{thm:neg-is-cycle}.

    Given an $n$-cycle $\myC = (V_{\myC}, E_{\myC})$, we construct a weakly $2$-coloured graph $\myG$ as follows (Figure~\ref{fig:reductions}d shows the case $\Delta=3$). For each node $v$ in $\myC$, there are $\Delta+1$ nodes in $\myG$: white nodes $v_1, v_2, \dotsc, v_\Delta$ and a black node $v_0$. Each black node $v_0$ has degree $\Delta$: it is adjacent to all white nodes $v_1, v_2, \dotsc, v_\Delta$. Each white node has degree $3$: for each edge $(u,v)$ in $\myC$, there are edges $\{u_1, v_1\},\allowbreak \{u_2, v_2\}, \dotsc, \allowbreak \{u_\Delta, v_\Delta\}$ in $\myG$.

    There is a matching with $(\Delta+1)n/2$ edges in $\myG$. To see this, let $X$ be a perfect matching in $\myC$, with $\mysize{X} = n/2$. Construct a perfect matching in $\myG$ as follows: for each edge $(u,v)$ in $X$, choose the edges $\{v_0, v_\Delta\}$, $\{u_0, u_\Delta\}$, and $\{u_i, v_i\}$ for each $i \in \{1,2,\dotsc,\Delta-1\}$.

    We simulate the algorithm $\myA$ in the graph $\myG$. The algorithm must output a matching $M$ with at least ${(1+\eps')}n$ edges. Since there are $n$ black nodes in $\myG$, there are at least $\eps'n$ edges in $M$ that connect a pair of white nodes. For each $i = 1, 2, \dotsc, \Delta$, let
\[
    I_i = \bigl\{ u \in V_{\myC} : (u,v) \in E_{\myC},\, \{ u_i, v_i \} \in M \bigr\}.
\]
Now each $I_i$ is an independent set in $\myC$ and $\sum_i \mysize{I_i} \ge \eps' n$.

    We will now use the sets $I_i$ to construct an independent set $I$ in $\myC$ with $\mysize{I} \ge n/\alpha$. At least one of the sets $I_i$ satisfies this condition, but a local algorithm cannot find the right index $i$; hence we proceed as follows. We begin with $I = \emptyset$. At each iteration $i = 1, 2, \dotsc, \Delta$, for each node $v \in I_i$ in parallel, we (i)~add $v$ to $I$, and (ii)~remove the copy of $v$ and its neighbours from $I_i, I_{i+1}, \dotsc, I_\Delta$.

    In the end, $I$ is an independent set and each $I_i$ is empty. Furthermore, for each node added to $I$ there are at most $2\Delta-1$ nodes that we removed from $I_1, I_2, \dotsc, I_\Delta$; the worst case is that $I_1$ contains a node $v$ and each of $I_2, I_3, \dotsc, I_\Delta$ contains the two neighbours of $v$. Hence $\mysize{I} \ge \eps' n / {(2\Delta-1)} = n/\alpha$.
\end{proof}

\begin{theorem}\label{thm:neg-is-weak}
    There is no local constant-factor approximation algorithm for independent set in weakly $2$-coloured graphs.
\end{theorem}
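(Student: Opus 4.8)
The plan is to give a local reduction from the maximum independent set problem in numbered directed cycles (Theorem~\ref{thm:neg-is-cycle}), in the same style as the other lower bounds in this section. Assume that a local algorithm $\myA$ achieves some constant approximation factor $\gamma$ for independent set in weakly $2$-coloured graphs. Given a numbered directed $n$-cycle $\myC$, I would construct a bounded-degree weakly $2$-coloured graph $\myG$ by blowing up each node of $\myC$ into a small constant-size gadget and wiring the gadgets together along the edges of $\myC$ (Figure~\ref{fig:reductions}e). Two properties are required of the construction: (i)~$\myG$ has an independent set of size $\Omega(n)$, so that a $\gamma$-approximation returns an independent set $J$ with $\mysize{J} = \Omega(n/\gamma)$; and (ii)~no ``cheap'' independent set of $\myG$ --- in particular neither of the two colour classes --- is large, so that a large $J$ is forced to encode independent sets of $\myC$.

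Concretely I would take $\myG$ to be a weakly $2$-coloured copy of the prism $\myC \times K_2$: for each node $v$ of $\myC$ create nodes $v_1$ (coloured white) and $v_2$ (coloured black) joined by an edge, and for each edge $(u,v)$ of $\myC$ add the edges $\{u_1,v_1\}$ and $\{u_2,v_2\}$; inherit the identifiers from $\myC$, e.g.\ $v_1 = 2v-1$, $v_2 = 2v$. The resulting graph is $3$-regular with no isolated nodes, and the colouring is a valid weak $2$-colouring since every white node $v_1$ has the black neighbour $v_2$ and every black node $v_2$ has the white neighbour $v_1$ --- note it is deliberately \emph{not} a proper $2$-colouring, as the edges $\{u_1,v_1\}$ join two white nodes. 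The white copy and the black copy of $\myC$ are each an $n$-cycle, so any independent set of $\myG$ has at most $\lfloor n/2\rfloor$ white and at most $\lfloor n/2\rfloor$ black nodes, while choosing $v_1$ for even $v$ and $v_2$ for odd $v$ gives an independent set of size $\Omega(n)$; hence the maximum independent set of $\myG$ has $\Theta(n)$ nodes.

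Next I would simulate $\myA$ on $\myG$. Each node $v$ of $\myC$ corresponds to the two nodes $v_1,v_2$ of $\myG$ whose $T$-neighbourhoods in $\myG$ are determined by the $O(T)$-neighbourhood of $v$ in $\myC$, so the output $J$ of $\myA$ can be computed by a local algorithm running on $\myC$. From $J$ I extract $I_1 = \{v : v_1 \in J\}$ and $I_2 = \{v : v_2 \in J\}$; since the white and black copies of $\myC$ are cycles, each $I_j$ is an independent set of $\myC$, and $\mysize{I_1} + \mysize{I_2} = \mysize{J} = \Omega(n/\gamma)$. A local algorithm cannot decide which of $I_1,I_2$ is larger, so --- exactly as in the proof of Theorem~\ref{thm:neg-m-weak} --- I would merge them by outputting $I = I_1 \cup (I_2 \setminus N_{\myC}[I_1])$, which is locally computable, is an independent set of $\myC$, and, because $\mysize{N_{\myC}[I_1]} \le 3\mysize{I_1}$ in a cycle, satisfies $\mysize{I} \ge (\mysize{I_1}+\mysize{I_2})/5 = \Omega(n/\gamma)$. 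This exhibits a local algorithm that finds an independent set of size $\Omega(n/\gamma)$, i.e.\ an $O(\gamma)$-approximation, in every numbered directed $n$-cycle, contradicting Theorem~\ref{thm:neg-is-cycle}.

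I expect the only genuine difficulty to be the gadget design for property~(ii): the simplest reductions fail here, because attaching pendant white nodes, or using a black ``centre'' adjacent to several white nodes as in the matching lower bound of Theorem~\ref{thm:neg-m-weak}, already makes one colour class (or the set of all leaves) an $\Omega(n)$-sized independent set of $\myG$ that carries no information about $\myC$. The prism avoids this by forcing both colour classes onto cycles, so that the size of $J$ really is transferred to $\myC$. The remaining points --- that the colouring is a legitimate weak $2$-colouring, that the whole reduction is local (a constant number of rounds on $\myC$ to simulate $\myA$ and perform the merge), and that the constant factors accumulate to a constant overall ratio for cycles --- are routine. (If one prefers a cleaner merging step or wishes to match Figure~\ref{fig:reductions}e exactly, one can instead use three copies $v_0,v_1,v_2$ of $\myC$ per vertex together with the intra-gadget edges $\{v_0,v_1\}$ and $\{v_0,v_2\}$, at the cost of maximum degree $4$ and a slightly larger constant.)
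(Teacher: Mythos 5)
Your proof is correct and follows essentially the same route as the paper: a local reduction from numbered directed cycles via a $3$-regular weakly $2$-coloured gadget graph (your prism construction matches the graph of Figure~\ref{fig:reductions}e), followed by merging the per-copy independent sets as in the proof of Theorem~\ref{thm:neg-m-weak}. The only difference is cosmetic: the paper's sequential merge loses a factor $3$ while your one-shot merge loses a factor $5$ (in fact $4$), which is immaterial for a constant-factor contradiction with Theorem~\ref{thm:neg-is-cycle}.
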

\begin{proof}
    The reduction is from an $n$-cycle $\myC$ to the $3$-regular weakly $2$-coloured graph $\myG$ illustrated in Figure~\ref{fig:reductions}e. If we can find an independent set with at least $k$ nodes in $\myG$, then techniques similar to those in the proof of Theorem~\ref{thm:neg-m-weak} can be used to construct an independent set with at least $k/3$ nodes in $\myC$. By Theorem~\ref{thm:neg-is-cycle}, we must have $k = o(n)$.
\end{proof}

\section{Algorithms for weakly coloured graphs}\label{sec:weak-algo}

In this section we give a local algorithm to find a spanning forest of stars in a weakly $2$-coloured graph. Once the stars are formed, it is simple to find a dominating set (roots of the stars) and a matching (one edge for each star).

To find a small dominating set, we would prefer large (high-degree) stars, and to find a large matching, we would prefer small (low-degree) stars. Nevertheless, the same approach -- find \emph{any} set of stars -- yields the \emph{same} approximation factor ${(\Delta+1)}/2$ for both problems. Moreover, this is the best possible (Theorems \ref{thm:neg-ds-strong} and \ref{thm:neg-m-weak}).

To build the stars, we can use an algorithm that is similar to the \emph{Balanced\_DOM} subroutine in Kutten and Peleg~\cite{kutten98fast}. We first construct a forest $F$ of rooted trees (Figure~\ref{fig:weak-algo}); the directed edges in $F$ point towards the trees' roots. The construction is simple:
\begin{enumerate}
    \item Each black node $b$ chooses a white neighbour $w$; add the edge $(b,w)$ to~$F$ (Figure~\ref{fig:weak-algo}b).
    \item Each white node $w$ which does not have any children in $F$ chooses a black neighbour $b$; add the edge $(w,b)$ to~$F$ (Figure~\ref{fig:weak-algo}c).
\end{enumerate}

\begin{figure}
    \centering
    \input{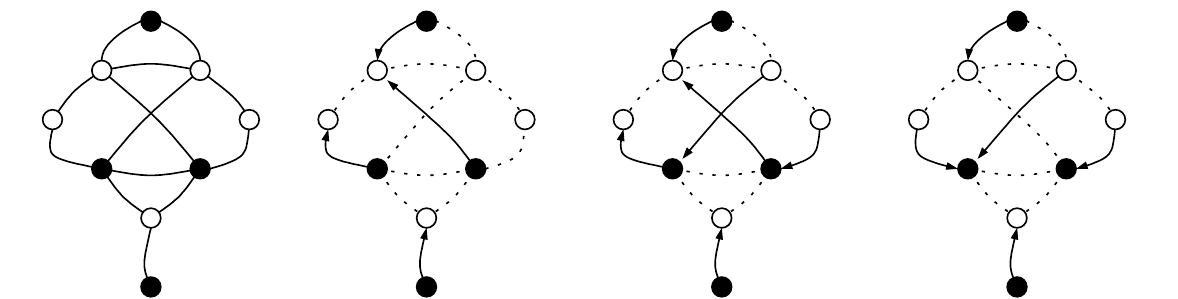_t}
    \caption{Finding stars in a weakly $2$-coloured graph.}\label{fig:weak-algo}
\end{figure}

At this point, every node belongs to a tree; the depth of each tree is $1$ or $2$. Next we make local modifications within each tree, depending on its structure. Let $r$ be the root of the tree.
\begin{enumerate}
    \item If all leaves are at depth $1$, do nothing.
    \item If there are leaf nodes both at depth $1$ and at depth $2$, remove all edges $(c,r)$ where $c$ is a non-leaf child.
    \item Otherwise, choose arbitrarily a child $x$ of the root. Remove all edges $(c,r)$ where $c$ is a child of the root, and $c \neq x$. Reverse the edge $(x,r)$.
\end{enumerate}
Now $F$ consists of stars, i.e., rooted trees of depth $1$ (Figure~\ref{fig:weak-algo}d). Each node is either a root node with at least one child, or a leaf node. The algorithm can be implemented by using only a port numbering; unique node identifiers are not needed. The port numbers are used both for representing the forest $F$ (e.g., a child does not know the identity of the parent node, but it knows the port number of the edge that leads to the parent node) and for breaking ties (e.g., when a black node has to choose one of its white neighbours).

Next we present the applications of the stars.
\begin{theorem}\label{thm:pos-ds-weak}
    For any $\Delta \ge 1$, there is a local algorithm with approximation factor ${(\Delta+1)/2}$ for dominating set in weakly $2$-coloured graphs.
\end{theorem}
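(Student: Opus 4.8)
The plan is to output the set $D$ of star roots produced by the star-finding procedure described above, and to bound $\mysize{D}$ against a trivial lower bound on the optimum.

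First I would check that the whole construction is a local algorithm. Steps~1 and 2 of the forest construction each take one communication round; the local modification of a tree depends only on whether its leaves sit at depth $1$, depth $2$, or both, which a node can determine from its $O(1)$-radius neighbourhood (a root learns which children are leaves, and a chosen child learns that it has been promoted). Hence the procedure finishes in a constant number of rounds using only a port numbering. If needed I would also spell out why it really does output a spanning forest of stars each having at least one edge: every black node gains an outgoing edge in step~1 and every childless white node in step~2, so every node lies in a tree of depth $1$ or $2$; then the three cases of the modification step are exhaustive (all leaves at depth $1$; leaves at both depths; all leaves at depth $2$), and in each case the tree is cut into one or more rooted trees of depth exactly $1$, none of which is a single node.

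Next, correctness: by construction every node of $\myG$ is either a root, hence in $D$, or a leaf adjacent to its root, hence dominated by $D$; so $D$ is a dominating set. For the approximation ratio, let $n = \mysize{V}$. On one side, each star contains a root and at least one leaf, so there are at most $n/2$ stars and $\mysize{D} \le n/2$. On the other side, in any dominating set every node must be covered by some node of its closed neighbourhood, and a closed neighbourhood contains at most $\Delta+1$ nodes; hence an optimal dominating set $D^*$ satisfies $\mysize{D^*} \ge n/(\Delta+1)$. Combining the two bounds gives $\mysize{D}/\mysize{D^*} \le (n/2)/(n/(\Delta+1)) = (\Delta+1)/2$.

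The arithmetic here is routine; the only step that genuinely needs care is verifying that the modification rules leave no stranded node — in particular, that detaching the non-$x$ children of the root and reversing the edge $(x,r)$ in the third case turns the tree into a union of stars with every root retaining at least one child — which reduces to checking that the case analysis is complete and that black nodes are allowed to become roots after the modification. Note the bound is already tight at $\Delta = 1$, where $\myG$ is a perfect matching, every star is a single edge, and $\mysize{D} = n/2 = \mysize{D^*}$; tightness for general $\Delta$ is supplied by Theorem~\ref{thm:neg-ds-strong}.
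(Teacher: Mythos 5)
Your proposal is correct and follows essentially the same route as the paper: take the star roots as $D$, bound $\mysize{D} \le \mysize{V}/2$ since every star has at least one leaf, and compare with the trivial lower bound $\mysize{D^*} \ge \mysize{V}/(\Delta+1)$. The extra verification that the star-forming procedure is local and leaves no stranded nodes is material the paper handles in the text of Section~\ref{sec:weak-algo} preceding the theorem rather than in the proof itself, and your case analysis there matches it.
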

\begin{proof}
    Let $D$ be the set of the roots of the stars. The set $D$ is a dominating set with at most $\mysize{V}/2$ nodes. Let $D^{*}$ be a minimum dominating set. Since a node cannot dominate more than $\Delta$ neighbours, $\mysize{D^{*}} \ge \mysize{V}/{(\Delta+1)}$. Therefore $D$ is a ${(\Delta+1)/2}$-approximation of a minimum dominating set.
\end{proof}

\begin{theorem}\label{thm:pos-m-weak}
    For any $\Delta \ge 1$, there is a local algorithm with approximation factor ${(\Delta+1)/2}$ for maximum matching in weakly $2$-coloured graphs.
\end{theorem}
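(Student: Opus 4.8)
The plan is to reuse the spanning forest of stars constructed above. Since $F$ now consists of vertex-disjoint stars that together cover all of $V$, I would form a matching $M$ by selecting, for each star, the single edge joining its root to one of its leaves (chosen arbitrarily via port numbers). As the stars are pairwise vertex-disjoint, $M$ is indeed a matching, and it is computed by a local algorithm because the star decomposition is.

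Next I would bound $\mysize{M}$ from below. Each star has one root and at least one leaf, so it contains at least $2$ nodes; on the other hand a root has at most $\Delta$ children, so a star contains at most $\Delta+1$ nodes. Writing $s$ for the number of stars, the fact that the stars partition $V$ gives $2s \le \mysize{V} \le {(\Delta+1)}s$, hence $s \ge \mysize{V}/{(\Delta+1)}$. Since $M$ contains exactly one edge per star, $\mysize{M} = s \ge \mysize{V}/{(\Delta+1)}$.

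Finally I would compare with a maximum matching $M^{*}$. Trivially $\mysize{M^{*}} \le \mysize{V}/2$, and therefore
\[
    \frac{\mysize{M^{*}}}{\mysize{M}} \le \frac{\Delta+1}{2},
\]
which is the claimed approximation factor; tightness is given by Theorem~\ref{thm:neg-m-weak}.

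I do not expect a genuine obstacle here: the argument is a mirror image of the dominating-set proof, using the upper bound $\Delta+1$ on a star's size instead of the lower bound $2$. The only point that needs care is that the construction of the previous section really does output a partition of $V$ into stars each of size between $2$ and $\Delta+1$ — which is precisely what that construction guarantees, since every node ends up either a root with at least one child or a leaf.
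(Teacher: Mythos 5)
Your proof is correct and is essentially identical to the paper's own argument: pick one edge per star, bound the matching size from below by $\mysize{V}/(\Delta+1)$ using the fact that each star has at most $\Delta+1$ nodes, and compare against the trivial upper bound $\mysize{M^{*}} \le \mysize{V}/2$. The extra detail you supply (the star count $s$ and the fact that the stars partition $V$) is implicit in the paper's construction, so there is nothing to add or fix.
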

\begin{proof}
    Let $M$ be the set of edges with one edge chosen arbitrarily from each star. The set $M$ is a matching. Each star contains at most $\Delta+1$ nodes; hence $\mysize{M} \ge \mysize{V}/{(\Delta+1)}$. For an optimal matching $M^{*}$, we have $\mysize{M^{*}} \le \mysize{V}/2$. Hence $M$ is a ${(\Delta+1)/2}$-approximation of a maximum matching.
\end{proof}

\section{Approximating dominating set if \texorpdfstring{$\Delta$}{Delta} is odd}

Now we are ready to present an application of the Naor--Stockmeyer algorithm for weak $2$-colouring~\cite{naor95what} and the techniques that we developed in Section~\ref{sec:weak-algo}. In this section we assume that the graph is not only port-numbered, but there is also an orientation: for each edge $\{u,v\} \in E$, exactly one direction $(u,v)$ or $(v,u)$ has been chosen.

The orientation can be used to break the symmetry in some cases. Specifically, if each node of $\myG$ has an odd degree, then we can use Naor and Stockmeyer's algorithm to find a weak $2$-colouring; the algorithm does not require unique identifiers~\cite{mayer95local}. Theorem~\ref{thm:pos-ds-weak} then provides a factor ${(\Delta+1)/2}$ approximation for dominating set.

However, in this section we study the case where the degree bound $\Delta$ is odd, but nothing else is known about the degrees of the graph; that is, the case for which we have the lower bound $\Delta-\eps$ from Theorem~\ref{thm:neg-ds-general}. A combination of weak colouring and Theorem~\ref{thm:pos-ds-weak} provides a matching upper bound.

\begin{theorem}\label{thm:pos-ds-odd}
    For any odd $\Delta \ge 1$, there is a local algorithm with approximation factor $\Delta$ for dominating set in graphs with maximum degree~$\Delta$, assuming that there is a port numbering and an orientation.
\end{theorem}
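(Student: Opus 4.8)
The plan is to reduce to the situation of Theorem~\ref{thm:pos-ds-weak}: force every degree to become odd, apply the Naor--Stockmeyer weak-colouring algorithm, and transplant the resulting star forest back to $\myG$. Concretely, to every node $v$ of even degree I would attach a \emph{virtual pendant} $v'$, a fresh node of degree $1$ adjacent only to $v$; let $\myG'$ be the resulting graph. Since $\Delta$ is odd, every even-degree node has degree at most $\Delta-1$, so in $\myG'$ every degree is odd and the maximum degree is still $\Delta$. A local algorithm on $\myG'$ is simulated by a local algorithm on $\myG$, each even-degree node playing the role of its own pendant. Using the orientation and the port numbering (but no identifiers), Naor and Stockmeyer's algorithm~\cite{naor95what,mayer95local} produces a weak $2$-colouring of $\myG'$; I would then run the star construction of Section~\ref{sec:weak-algo} on $\myG'$ with this colouring, obtaining a spanning forest of $\myG'$ whose components are stars. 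Finally let $D$ be the set of roots of these stars, where every root that happens to be a pendant $v'$ is replaced by its host $v$ (such a star must be exactly $\{v',v\}$, since $v'$ has degree $1$), and the virtual nodes are discarded.

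The two facts I would establish about $D$ -- and which I expect to be the delicate part of the argument -- are the following. (i)~Because $v'$ is adjacent only to $v$, every even-degree node $v$ is either the root of its star or the leaf of a star $\{v',v\}$ rooted at its own pendant; in both cases $v$ lies in $D$, so \emph{every} even-degree node of $\myG$ belongs to $D$. (ii)~An odd-degree node that is a leaf of its star lies outside $D$; moreover, by~(i), the leaves of any star whose root is an odd-degree node are themselves odd-degree real nodes (an even-degree node can only be a leaf of a pendant-rooted star, and a virtual node can only be a leaf of its host). Granting (i)--(ii), $D$ is a dominating set of $\myG$: a node of $\myG$ is either even-degree (hence in $D$), or an odd-degree root of its star (in $D$), or an odd-degree leaf, in which case the root of its star is a real node of $D$ adjacent to it in $\myG$.

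For the approximation factor I would fix a minimum dominating set $D^{*}$ of $\myG$ and charge each $x\in D$ to a node $\phi(x)\in D^{*}$ with $x$ in the closed neighbourhood $N[\phi(x)]$ of $\phi(x)$ in $\myG$; it then suffices to show $\mysize{D\cap N[u]}\le\Delta$ for every $u\in D^{*}$, since then $\mysize{D}\le\sum_{u\in D^{*}}\mysize{D\cap N[u]}\le\Delta\,\mysize{D^{*}}$. If $u$ has even degree, then $\mysize{N[u]}=\deg u+1\le\Delta$ and there is nothing to prove. If $u$ has odd degree, then $\mysize{N[u]}\le\Delta+1$, but by~(ii) some node of $N[u]$ is kept out of $D$: if $u$ is not the root of its star then $u\notin D$ itself; otherwise $u$'s star has at least two nodes and therefore contains an odd-degree real leaf, which lies in $N[u]\setminus D$. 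In both cases $\mysize{D\cap N[u]}\le\deg u\le\Delta$, as required.

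The main obstacle is the bookkeeping behind~(i)--(ii): one must verify that the virtual pendants never spawn a star of their own that wastes a root (each pendant instead forces its host into $D$), and that for an odd-degree node the extra ``$+1$'' in the size of its closed neighbourhood is always cancelled by a genuine element of $N[u]$ -- either $u$ itself or an honest leaf of its star -- that the construction excludes from $D$. Everything else, namely the correctness of the weak colouring on $\myG'$, the correctness of the star construction, and the closing arithmetic, is routine.
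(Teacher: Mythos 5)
Your proof is correct, but it takes a genuinely different route from the paper. The paper does not pendant-ize the whole graph: it partitions $V$ into $A$ (odd-degree nodes), $B$ (even-degree nodes with a neighbour in $A$) and $C$ (the remaining even-degree nodes), adds dummy degree-$1$ nodes only to the even-degree nodes of the induced subgraph $\myH = \myG[A \cup B]$ (here it matters that nodes of $A$ keep odd degree in $\myH$, since they have no neighbours in $C$), runs Naor--Stockmeyer on the augmented graph, then \emph{repairs} the colouring of the $B$-nodes (flipping those whose only opposite-colour neighbour was a dummy) so as to obtain a weak $2$-colouring of $\myH$ itself, applies Theorem~\ref{thm:pos-ds-weak} to get $D_\myH$ with $\mysize{D_\myH} \le \mysize{A\cup B}/2$, and outputs $D = D_\myH \cup C$. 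The factor $\Delta$ then comes from a global counting argument: $\mysize{D^*} \ge (\mysize{A}+\mysize{B})/(\Delta+1) + \mysize{C}/\Delta$ versus $\mysize{D} \le (\mysize{A}+\mysize{B})/2 + \mysize{C}$. You instead attach a pendant to \emph{every} even-degree node, keep the pendants in the graph when building the stars (so no colour repair is needed), accept that this forces all even-degree nodes into $D$, and compensate with a local charging argument against $D^*$ using the per-node bound $\mysize{D\cap N[u]}\le\Delta$. Your claims (i)--(ii) do hold, precisely because the star forest of Section~\ref{sec:weak-algo} is spanning and vertex-disjoint, so a degree-$1$ pendant is either a root whose unique child is its host or a leaf whose parent is its host; this is the fact worth stating explicitly. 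The trade-off: the paper's construction produces a potentially smaller dominating set (it never takes all of $B$) and its analysis is a short density computation, while yours avoids the colour-repair step and the $A,B,C$ bookkeeping at the price of a more delicate case analysis of the stars and a charging argument; both correctly achieve the factor $\Delta$, which is tight by Theorem~\ref{thm:neg-ds-general}.
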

\begin{proof}
Partition $V$ into $V = A \cup B \cup C$ such that $A$ consists of the odd-degree nodes, $B$ consists of the even-degree nodes adjacent to at least one node in $A$, and $C$ is the rest; in particular, the degree of each node in $B$ or $C$ is at most $\Delta-1$.

Consider the subgraph $\myH$ induced by $A\cup B$. In the subgraph $\myH$, the degree of each node in $A$ is odd, but some of the nodes in $B$ may have an even degree. Construct a new graph $\myH_2$ by adding a new dummy node of degree $1$ as a neighbour of each even-degree node in the subgraph $\myH$. Now every node in $\myH_2$ has an odd degree, and we can use the Naor--Stockmeyer algorithm~\cite{mayer95local} to weakly $2$-colour it.

At this point, each node in $A$ is adjacent to a node of the opposite colour, but this does not necessarily hold for the nodes in $B$. However, we can easily find valid colours for each node in $B$ in parallel: if $b \in B$ and each $a \in A$ adjacent to $b$ has the same colour as $b$, then we reverse the colour of $b$. Now each node in $B$ has a neighbour with the opposite colour in $A$; furthermore, no node in $A$ lost a neighbour of the opposite colour.

Thus $\myH$ is weakly $2$-coloured, and we can apply the algorithm of Theorem~\ref{thm:pos-ds-weak} to find a dominating set $D_\myH$ with $\mysize{D_\myH} \le \mysize{A \cup B}/2$ in the subgraph $\myH$. The set $D = D_\myH \cup C$ is now a dominating set of the original graph $\myG$.

Let $D^*$ be a minimum dominating set of $V$. Let $D_1^* = D^* \cap A$ and $D_2^* = D^* \cap (B \cup C)$. Since a node with a degree $d$ can dominate at most $d+1$ nodes, and the nodes in $D_1^*$ are not adjacent to the nodes in $C$, the set $D^*$ must satisfy
\begin{align*}
    (\Delta+1)\mysize{D_1^*}+\Delta\mysize{D_2^*} &\ge \mysize{A}+\mysize{B}+\mysize{C}, \\
    \Delta \mysize{D_2^*} &\ge \mysize{C},
\intertext{which implies}
    \mysize{D^*} = \mysize{D_1^*}+\mysize{D_2^*} &\ge \frac{\mysize{A}+\mysize{B}}{\Delta+1}+\frac{\mysize{C}}{\Delta}.
\end{align*}
Since $\mysize{D} \le (\mysize{A}+\mysize{B})/2+\mysize{C}$, we have $\mysize{D}/\mysize{D^*} \le \Delta$.
\end{proof}

\begin{remark}
    It is necessary to assume that the graph is oriented in Theorem~\ref{thm:pos-ds-odd}. If the nodes are anonymous and there is a port numbering but no orientation, a deterministic distributed algorithm cannot have a better approximation factor than $\Delta+1$. To see this, consider the complete graph $K_{\Delta+1}$ on $\Delta+1$ nodes. Find an edge colouring of $K_{\Delta+1}$ with $\Delta$ colours -- this is possible, since we assumed that $\Delta$ is odd. Use the edge colouring to assign the port numbers: an edge with colour $k$ has the port number $k$ in both ends. Now from the perspective of distributed algorithms, the nodes are indistinguishable. Any deterministic algorithm has to produce the same output for each node; in particular, it has to output a dominating set with $\Delta+1$ nodes, while $1$ node would suffice.
\end{remark}

\section{Matching in two-coloured graphs}\label{sec:mm-scheme}

In Sections~\ref{sec:lower} and~\ref{sec:weak-algo} we proved that in weakly $2$-coloured graphs the maximum matching problem can be approximated to within a factor of $(\Delta + 1) / 2$, but not better. In this section we show that in $2$-coloured graphs the problem has a local approximation scheme.

Given a matching $M$, an augmenting path (w.r.t.\ $M$) is a path that starts and ends at an unmatched node and whose every other edge belongs to $M$. An augmenting tree is a tree whose every root--leaf path is an augmenting path. In Figures~\ref{fig:mm-scheme}a--c, a matching, two augmenting trees (rooted at black nodes), and two augmenting paths are shown.

\begin{figure}
    \centering
    \input{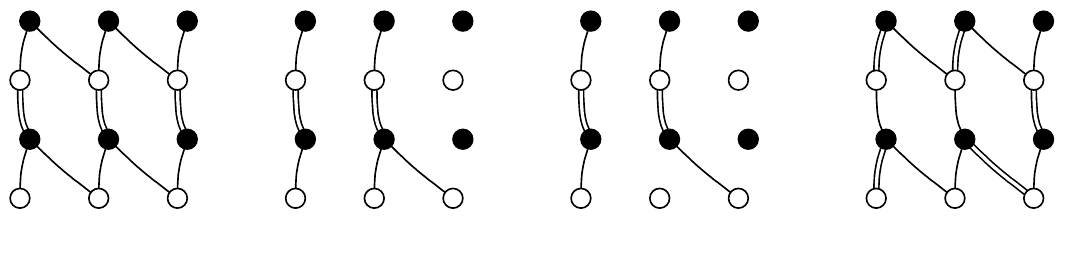_t}
    \caption{Finding length-$3$ augmenting paths in $2$-coloured graphs with a local algorithm. (a)~The graph~$\myG$. A matching~$M$ is highlighted with double lines. The matching is maximal, i.e., there is no length-$1$ augmenting path. However, there are several length-$3$ augmenting paths. (b)~Augmenting trees. The set of root nodes is a \emph{subset} of the black endpoints of the length-$3$ augmenting paths, while the set of leaf nodes is \emph{equal} to the set of white endpoints of the length-$3$ augmenting paths. (c)~Augmenting paths, one per tree. (d)~The new matching.}\label{fig:mm-scheme}
\end{figure}

The symmetric difference of $M$ and an augmenting path is a new matching whose size is larger than the size of $M$ by~1. If every augmenting path is longer than $2k - 1$, then the size of $M$ is at least $k/{(k+1)}$ times the size of the maximum matching (folklore). Hence we have the following $(1 + 1/k)$-approximation algorithm for maximum matching: Starting from an empty matching, for each $i = 1, 2, \dotsc, k$, find repeatedly augmenting paths of length $2i - 1$ and augment along the paths, until no such path exists. The iteration $i=1$ is equal to finding a maximal matching, which could be done locally by Ha\'{n}\'{c}kowiak \myetal{}'s~\cite{hanckowiak98distributed} algorithm. Below we give a local algorithm that implements the iteration $i$ for a general $i \ge 1$. The algorithm uses techniques presented by Balas \myetal{}~\cite{balas91parallel} in the context of parallel algorithms.

Our algorithm repeatedly invokes a subroutine that removes \emph{some} augmenting paths of length $h=2i - 1$, assuming that there is no shorter augmenting path. The subroutine consists of three phases (refer to Figure~\ref{fig:mm-scheme}).
\begin{enumerate}
    \item In the \emph{flooding phase}, we construct a forest $F$ of disjoint augmenting trees, rooted at black nodes (Figure~\ref{fig:mm-scheme}b). Each root--leaf path has length $h$. Furthermore, if there is a length-$h$ augmenting path in the original graph between a black node $b$ and a white node $w$, then $w$ is a leaf node in a tree of $F$. (However, $b$ may or may not be a root node in a tree of $F$.)
     \item In the \emph{proposal phase}, we choose one augmenting path in each tree (Figure~\ref{fig:mm-scheme}c).
    \item Finally, in the \emph{augmenting phase}, we augment along the paths in parallel to find a new matching (Figure~\ref{fig:mm-scheme}d).
\end{enumerate}

To implement the flooding phase, each unmatched black node considers itself as a potential root of an augmenting tree. Every root node sends a message to each of its neighbours. When a white node receives messages, it chooses one of the senders as its parent, and forwards the message to its neighbour along an edge in the matching. When a black node receives a message, it chooses the sender as its parent, and forwards the message to its neighbours along each edge that is not in the matching. Messages are propagated for $h$ hops; messages that reach a matched white node or a dead end are simply discarded.

Since there is no augmenting path shorter than $h$, every unmatched white node that receives a message is an endpoint of a length-$h$ augmenting path. Conversely, all white endpoints of length-$h$ augmenting paths are reached by the messages. These unmatched white nodes become the leaves of the forest $F$. The edges of $F$ are defined by the links that point towards the parent nodes.

We now show that the trees of the forest $F$ are disjoint. To reach a contradiction, assume that $T_1$ and $T_2$ are two trees in $F$ and they share a node $v$. Let $b_j$ be the root node of the tree $T_j$; by assumption, $b_1 \ne b_2$. Let $P_j$ be an augmenting path in $T_j$ that begins from $b_j$, passes through $v$, and ends at a leaf node; let $\ell_j$ be the distance between $b_j$ and $v$ along $P_j$. If we had $\ell_1 = \ell_2$, the message initiated by the root $b_1$ would have reached the node $v$ on the same time step as the message initiated by the root $b_2$, and in our algorithm $v$ (or one of its ancestors) would have discarded one of the messages and joined only one of the trees. Hence we must have $\ell_1 \ne \ell_2$; but then it is possible to find an augmenting path (in the union of $P_1$ and $P_2$) that is strictly shorter than $h$, which contradicts our assumption.

Hence a local algorithm can find the forest $F$ with the above-mentioned properties. The other steps of the algorithm are straightforward. In the proposal phase, messages are initiated by the leaf nodes and propagated towards the root nodes; whenever several messages meet, all but one of them are discarded. Eventually, we have chosen exactly one augmenting path in each tree. Finally, in the augmenting phase, we augment along each of these paths in parallel.

To analyse how many invocations of the subroutine are needed, note that a white node can be an endpoint of at most $t_i = \Delta(\Delta-1)^{i-1}$ length-$h$ augmenting paths. Every invocation matches the other endpoint of at least one such path. Furthermore, it can be shown that no new augmenting paths with at most $h$ edges are created. Therefore, after $t_i$ invocations, there is no augmenting path with $h$ edges or fewer.

\begin{theorem}\label{thm:pos-m-strong}
    For any $\Delta \ge 1$ and $\eps > 0$, there is a local algorithm with approximation factor ${1+\eps}$ for maximum matching in $2$-coloured graphs. \qed
\end{theorem}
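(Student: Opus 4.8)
The plan is to turn the discussion above into a single local algorithm and then bound its round complexity. Fix $k = \lceil 1/\eps \rceil$, so that the folklore bound quoted above gives a $(1+1/k) \le (1+\eps)$ approximation as soon as no augmenting path of length at most $2k-1$ survives. Starting from the empty matching, run $k$ stages; stage $i$ (for $i=1,\dots,k$) maintains the invariant that, when it begins, there is no augmenting path with fewer than $h = 2i-1$ edges, and it terminates with no augmenting path of length at most $h$. Stage $1$ ($h=1$) is a maximal-matching computation, which is local by the algorithm of~\cite{hanckowiak98distributed} (or directly by the $h=1$ case of the subroutine below). The algorithm uses only the $2$-colouring and a port numbering; the port numbers break all ties (which root a node joins, which augmenting path is chosen).

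Stage $i$ invokes the three-phase subroutine — flooding, proposal, augmenting — exactly $t_i = \Delta(\Delta-1)^{i-1}$ times. Each invocation is local: flooding propagates messages for only $h$ hops, so every node learns its role within $O(h)$ rounds; the resulting forest $F$ of augmenting trees is well defined because, as argued above, two trees sharing a node $v$ at distinct root-distances $\ell_1 \ne \ell_2$ would yield an augmenting path shorter than $h$, contradicting the stage invariant; the proposal phase sends messages up the trees (again $\le h$ hops) to select one augmenting path per tree; and the augmenting phase flips these paths, which is consistent since distinct trees are node-disjoint, and each flip increases $\mysize{M}$ by one.

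The main obstacle — and the one fact in the surrounding text that is only asserted (``it can be shown'') — is that a single invocation creates no augmenting path with at most $h$ edges. This is a symmetric-difference / alternating-path argument in the spirit of Balas \myetal{}~\cite{balas91parallel}: a short augmenting path with respect to the new matching, combined with the node-disjoint paths that were just flipped, decomposes into alternating pieces that would exhibit a short augmenting path with respect to the old matching, contradicting either the invariant or the completeness of the flooding phase. Granting this, termination within $t_i$ invocations is easy: after each invocation every tree root $b$ (an unmatched black node) becomes matched, so every augmenting path that ended at $b$ — in particular the root-to-leaf path of $F$ through any given white leaf $w$ — is destroyed; hence the number of length-$h$ augmenting paths ending at a fixed white node strictly decreases whenever it is positive, and since that number is at most $t_i$ (a length-$(2i-1)$ augmenting path from a white node makes $i$ non-matching-edge choices, $\le\Delta$ for the first and $\le\Delta-1$ for each of the remaining $i-1$, while the matching edges are forced) and, by the claim above, never increases, after $t_i$ invocations no length-$h$ augmenting path remains.

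Finally, summing the cost of all stages gives $\sum_{i=1}^{k} t_i \cdot O(i) = O\bigl(k\,\Delta(\Delta-1)^{k-1}\bigr)$ communication rounds, a constant depending only on $\Delta$ and $\eps$ and not on $\mysize{V}$, so the algorithm is local. After stage $k$ there is no augmenting path of length at most $2k-1$, so the folklore bound yields $\mysize{M} \ge \frac{k}{k+1}\mysize{M^{*}} \ge \frac{1}{1+\eps}\mysize{M^{*}}$, which proves the theorem.
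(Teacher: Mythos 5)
Your proposal follows essentially the same route as the paper: the paper's proof of Theorem~\ref{thm:pos-m-strong} is precisely the Section~\ref{sec:mm-scheme} discussion (folklore $k/(k+1)$ bound, the flooding/proposal/augmenting subroutine, tree disjointness via $\ell_1\ne\ell_2$, and the $t_i=\Delta(\Delta-1)^{i-1}$ invocation count), and you reproduce it with correct bookkeeping for $k=\lceil 1/\eps\rceil$ and the round complexity. The one step you leave as a sketch (``granting this''), namely that an invocation creates no new augmenting path with at most $h$ edges, is exactly the step the paper itself only asserts (``it can be shown''), so your write-up matches the paper's level of rigour and is otherwise complete.
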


\section*{Acknowledgements}

This work was supported in part by the Academy of Finland, Grants 116547, 118653 (ALGODAN), and 132380, by Helsinki Graduate School in Computer Science and Engineering (Hecse), and by the Foundation of Nokia Corporation.

\def\bibfont{\small}

\end{document}